\documentclass[11pt,a4paper]{article}
\usepackage[numbers,sort&compress]{natbib}
\usepackage{amsmath,amssymb,amsthm}
\usepackage[margin=1in]{geometry}
\usepackage[english]{babel}
\usepackage{hyperref}
\usepackage{graphicx}
\usepackage{physics}
\usepackage{bm}
\usepackage{caption}
\usepackage{subcaption}
\usepackage{color}
\usepackage{fancybox}
\usepackage{amsmath}
\usepackage{amsfonts}
\usepackage{cancel}
\usepackage{mathrsfs}
\usepackage{multirow}
\usepackage{pstricks}
\usepackage{multicol}
\usepackage{bm}
\usepackage{tikz}
\usepackage{tikz-feynman}
\usepackage{feynmp-auto}
\usepackage{amssymb}
\usepackage{float}
\usepackage{xcolor}
\usepackage{xfrac}
\usepackage{chemfig}
\usepackage{eso-pic}
\usepackage[numbers]{natbib}
\usepackage[T1]{fontenc}
\usepackage{titlesec}
\usepackage{authblk}
\usepackage{enumitem}
\titleformat{\section}
{\bfseries}{\thesection.}{1em}{\normalfont\bfseries}
\usetikzlibrary{positioning, arrows.meta}
\titleformat{\subsection}
{\bfseries}{\thesubsection}{1em}{\small\bfseries}
\usetikzlibrary{positioning, arrows.meta}
\hypersetup{
	colorlinks=true,       
	linkcolor=blue,       
	citecolor=blue,     
	filecolor=blue,     
	urlcolor=blue          
}
\theoremstyle{definition}
\newtheorem{theorem}{Theorem}

\newtheorem{definition}[theorem]{Definition}
\newtheorem{example}[theorem]{Example}

\newtheorem{prop}[theorem]{Proposition}
\title{\Large \textbf{A Relation Between the Chrestenson Operator, Weyl Operator Basis, and Kronecker-Pauli Operator Basis}}
\author[1, 2]{Mickaya A. Razanaparany\footnote{Email: \href{mailto:mickaya@aims.ac.za}{\texttt{mickaya@aims.ac.za}}}}
\author[3, 4]{Christian Rakotonirina\footnote{Email: \href{mailto:rakotonirinachristianpierre@gmail.com}{\texttt{rakotonirinachristianpierre@gmail.com}}}}
\affil[1]{\small\textit{Physique des Hautes Energies, PHE, Université d'Antananarivo, Madagascar}}
\affil[2]{\small\textit{African Institute for Mathematical Sciences, AIMS, South Africa}}
\affil[3]{\small\textit{Institut Supérieur de Technologie d'Antananarivo, IST-T, Madagascar}}
\affil[4]{\small\textit{Laboratoire de la Dynamique de l'Atmosphère, du Climat et des Océans, DyACO, Université d'Antananarivo, Madagascar}}
\date{}
\begin{document}
\maketitle
\begin{abstract}
	Within the framework of quantum theory, we review the Chrestenson operator, the Weyl operator basis, and the Kronecker-Pauli operator basis in $d$-dimensional Hilbert spaces using Dirac notation, where $d$ is a prime integer strictly greater than 2. We establish a new algebraic relation connecting these operators and present the cases $d=3$ and $d=5$ as illustrative examples.
	
	\vspace*{0.5em}
	\noindent\textbf{Keywords: } Chrestenson operators, Weyl operators, Kronecker-Pauli operators, Qudit, $\ldots$
\end{abstract}
\section{\textsc{Introduction}}
Quantum theory is formulated in terms of operators acting on a Hilbert space $\mathcal{H}$. All physical quantities of interest, such as observables, density matrices, and evolution operators, are represented by operators that can be expressed as linear combinations of a set of basic operators. As illustrative examples, linear operators map a vector space to itself, while specific classes are distinguished by their adjoint properties. Unitary operators satisfy the condition that their adjoint equals their inverse, and Hermitian operators are equal to their adjoint. The Pauli operators form a familiar case, where the identity $\mathbb{I}$ leaves a qubit invariant and the remaining operators:
\begin{equation}
	X=\begin{pmatrix}
		0&1\\
		1&0
	\end{pmatrix},\qquad
	Y=\begin{pmatrix}
		0&-i\\
		i&0
	\end{pmatrix},\qquad\text{ and }\qquad
	Z=\begin{pmatrix}
	1&0\\
	0&-1
	\end{pmatrix},
\end{equation}
generate elementary transformations~\citep{quantMec, unitOPER}. Understanding operators is essential for the foundations of quantum theory, the design of quantum algorithms, and the handling of subtle computations involving higher-order matrices~\cite{quantMec, unitOPER, quantum, quantLECT}.

In a $d$-dimensional Hilbert space $\mathcal{H}$, quantum states can be represented using Weyl operators. The Weyl operators are unitary, traceless, involve the $d$-th roots of unity, and form a group~\cite{weyl222}. While the $3 \times 3$ Kronecker--Pauli matrices (KPMs) have been explored in~\cite{kpmtrois}, and more general constructions have been systematically studied in~\citep{setOfKPMs, kpoperat}. The Kronecker-Pauli operators are unitary, hermitian, and have, for $d$ a prime integer, trace equal to unity. The relation between these operators has not yet been fully established. The aim of this paper is to introduce the Chrestenson operator and to show how it provides a relation between Weyl and Kronecker-Pauli operators in $d$-dimensional Hilbert spaces, where $d$ is a prime integer with $d>2$.

The paper is organized as follows. Section~\ref{defOPDEUX} reviews the relevant operator definitions, together with their properties and illustrative examples. Section~\ref{secTNTRREalt} derives expressions that relate the Chrestenson operator, the Weyl operator basis, and the Kronecker-Pauli operator basis for $d>2$ a prime integer, and presents their matrix relations for $d=3$ and $d=5$. Section~\ref{ccl} concludes the paper and provides some outlook.
\section{\textsc{Operator Definitions}}\label{defOPDEUX}
As is well known in quantum computation, the Hadamard transform is represented by
\begin{equation}
	H=\frac{1}{\sqrt{2}}
	\begin{pmatrix}
		1&1\\
		1&-1
	\end{pmatrix}.
\end{equation}
When a qubit in the span $\{|0\rangle, |1\rangle\}$ is acted upon the Hadamard gate, the resulting state has equal probability outcomes when measured~\cite{quantum}. Moving to a qutrit, $i.e.$ a state in the span $\{|0\rangle, |1\rangle,|2\rangle\}$, the so-called radix-3 Chrestenson transform is given by
\begin{equation}
	C_3=\frac{1}{\sqrt{3}}
	\begin{pmatrix}
		1&1&1\\
		1&w&w^2\\
		1&w^2&w
	\end{pmatrix}.
\end{equation}
In higher-dimensional quantum systems, the Chrestenson transform matrices can be seen as an extension of the Hadamard transform. More generally, for a qudit, $i.e.$ a state in the span $\{|0\rangle, |1\rangle,\,\dots, |d-1\rangle\}$, the radix-$d$ Chrestenson transform matrix is a form of the discrete Fourier transform (DFT), expressed as
\begin{equation}\label{radixdDFT}
	C_d=\frac{1}{\sqrt{d}}
	\begin{pmatrix}
		1&1&1&\cdots&1\\
		1&w&w^2&\cdots&w^{(d-1)}\\
		1&w^2&w^4&\cdots&w^{2(d-1)}\\
		\vdots&\vdots&\vdots&\ddots&\vdots\\
		1&w^{d-1}&w^{2(d-1)}&\cdots&w^{(d-1)^2}
	\end{pmatrix},
\end{equation}
where $w:=e^{2i\pi/d}$ is the $d$-th root of unity\footnote{The $d$-th root of unity are the solutions of $w^d=1$ in $\mathbb{C}$, implies that $1+w+\cdots+w^{d-1}=0$ if $w\neq 1$ for $d$ integer.}~\citep{logic, chresHIGH}.\\

Let us now define the Chrestenson operators, which is the matrix in eq.~\eqref{radixdDFT} expressed in terms of Dirac Bra and Ket notation.
\begin{definition}
	Let $\mathcal{H}$ be a Hilbert space of dimension $d\geq 2$, with computational basis $\{|0\rangle, |1\rangle,\,\dots\allowbreak, |d-1\rangle\}$ spanning the space, $\mathcal{H}=\texttt{span}\{|0\rangle, |1\rangle,\,\dots, |d-1\rangle\}$. The \textit{Chrestenson operators} $C_d$ is defined as
	\begin{equation}\label{chresOP}
		C_d=\frac{1}{\sqrt{d}}\sum_{x,y=0}^{d-1}w^{xy}\Big|y\Big\rangle \Big\langle x\Big|.
	\end{equation}
\end{definition}
\noindent The basic properties of the Chrestenson operators $C_d$, defined in eq.~\eqref{chresOP}, follow directly from its construction. In particular, $C_d$ is a unitary operator, satisfying $C_d^\dagger C_d = \mathbb{I}_d = C_d C_d^\dagger.$ As a consequence of its unitarity, the inverse of $C_d$ coincides with its adjoint, that is, $C_d^{-1} = C_d^\dagger$.\\

A key operator in our formulation is the Weyl operator basis, defined as follow~\citep{weyl1111, weyl222}:
\begin{definition}The \textit{Weyl operators} $U_{nm}$ is defined by
	\begin{equation}\label{weylOP}
		U_{nm}=\sum_{k=0}^{d-1}w^{kn}\Big|k\Big\rangle\Big\langle \big(k+m\big)\bmod d\Big|,
	\end{equation}
	where $n,m\in\{0,1,\,\dots, d-1\}$.
\end{definition}
\noindent These operators satisfy several properties: they form a set of $d^2$ matrices, consisting of the identity operator and $d^2-1$ traceless operators; All $d^2$ operators are unitary and constitute an orthonormal basis for the space of linear operators on $\mathcal{H}$ with respect to the Hilbert-Schmidt inner product, that is
\begin{equation}
	\mathrm{Tr}\!\left(U_{nm}^\dagger U_{pq}\right)
	= d\,\delta_{np} \delta_{mq}.
\end{equation}

\begin{example}
	Let us take as an example the case $d=3$ which working in the computation basis:
	\begin{equation}
		|0\rangle = \begin{pmatrix}
			1\\0\\0
		\end{pmatrix},\qquad |1\rangle = \begin{pmatrix}
		0\\1\\0
	\end{pmatrix}, \qquad |2\rangle = \begin{pmatrix}
	0\\0\\1
	\end{pmatrix}.
	\end{equation}
	The 9 matrices are the following:
	\begin{equation}
		\begin{array}{l l l}
				U_{00} = \begin{pmatrix}
					1&0&0\\0&1&0\\0&0&1
				\end{pmatrix},
			&\quad
				U_{01} = \begin{pmatrix}
					0&1&0\\0&0&1\\1&0&0
				\end{pmatrix},
			&\quad
				U_{02} = \begin{pmatrix}
					0&0&1\\1&0&0\\0&1&0
				\end{pmatrix},
			\\[2.5em]
				U_{10} = \begin{pmatrix}
					1&0&0\\0&w&0\\0&0&w^2
				\end{pmatrix},
			&\quad
				U_{11} = \begin{pmatrix}
					0&1&0\\0&0&w\\w^2&0&0
				\end{pmatrix},
			&\quad
				U_{12} = \begin{pmatrix}
					0&0&1\\w&0&0\\0&w^2&0
				\end{pmatrix},
			\\[2.5em]
				U_{20} = \begin{pmatrix}
					1&0&0\\0&w^2&0\\0&0&w
				\end{pmatrix},
			&\quad
				U_{21} = \begin{pmatrix}
					0&1&0\\0&0&w^2\\w&0&0
				\end{pmatrix},
			&\quad
				U_{22} = \begin{pmatrix}
					0&0&1\\w^2&0&0\\0&w&0
				\end{pmatrix}.
		\end{array}
	\end{equation}
\end{example}
We introduce the following definitions, which play a central role in the analysis of Kronecker-Pauli matrices and operators~\citep{kpmtrois, setOfKPMs, kpoperat}.
\begin{definition}
	Let $d>2$ be an integer, a family of $d\times d$-KPMs is a collection $\big\{\Pi_k\big\}_{k=0}^{d^2-1}$ such that the following properties being satisfied:
	\begin{enumerate}[label=(\arabic*)]
		\item $S_{d\otimes d}=\frac{1}{d}\sum_{k=0}^{d^2-1}\Pi_k\otimes \Pi_k$ \quad ($d\otimes d$ swap operator)
		\item $\Pi_k^\dagger=\Pi_k, \text{ for } 0\leq k \leq d^2-1$\quad (hermiticity)
		\item $\Pi_k^2=\mathbb{I}_d, \text{ for } 0\leq k \leq d^2-1$\quad (square root of the unit)
		\item $\Tr\big(\Pi_k^\dagger\Pi_\ell\big)=d\delta_{k\ell}, \text{ for }0\leq k \text{ and } \ell\leq d^2-1$\quad (orthogonality)
	\end{enumerate}
	where $\delta_{k\ell}$ is the Kronecker symbol\footnote{The Kronecker delta $\delta_{k\ell}$ equals $1$ if $k=\ell$ and $0$ otherwise.}.
\end{definition}
For a prime number $d$, Kronecker-Pauli operators are defined so that their matrices in the standard basis form a complete set of $d\times d$-KPMs~\cite{kpoperat}.
\begin{definition}
	For a prime integer $d>2$, we define a \textit{Kronecker-Pauli operators} $\Pi_{nm}$ as
	\begin{equation}
		{\prod}_{nm}=\sum_{k=0}^{d-1}w^{(k-n)m}\Big|k \Big\rangle\Big\langle (-k+2n)\bmod d\Big|,
	\end{equation}
	where $n,m\in\{0,1,\,\dots, d-1$\}.\\
\end{definition}
\begin{example}\label{exampleSIX}
	For $d=3$, the $\tau_k$ with $k=1,2,\,\dots, 9$ KPMs are the following:
	\begin{equation}
		\begin{array}{l l l}
			\tau_1= \begin{pmatrix}
				1&0&0\\0&0&1\\0&1&0
			\end{pmatrix},
			&\quad
			\tau_2 = \begin{pmatrix}
				1&0&0\\0&0&w\\0&w^2&0
			\end{pmatrix},
			&\quad
			\tau_3 = \begin{pmatrix}
				1&0&0\\0&0&w^2\\0&w&0
			\end{pmatrix},
			\\[2.5em]
			\tau_4 = \begin{pmatrix}
				0&0&1\\0&1&0\\1&0&0
			\end{pmatrix},
			&\quad
			\tau_5 = \begin{pmatrix}
				0&0&w\\0&1&0\\w^2&0&0
			\end{pmatrix},
			&\quad
			\tau_6 = \begin{pmatrix}
				0&0&w^2\\0&1&0\\w&0&0
			\end{pmatrix},
			\\[2.5em]
			\tau_7 = \begin{pmatrix}
				0&1&0\\1&0&0\\0&0&1
			\end{pmatrix},
			&\quad
			\tau_8 = \begin{pmatrix}
				0&w&0\\w^2&0&0\\0&0&1
			\end{pmatrix},
			&\quad
			\tau_9 = \begin{pmatrix}
				0&w^2&0\\w&0&0\\0&0&1
			\end{pmatrix}.
		\end{array}
	\end{equation}
\end{example}
\section{\textsc{Chrestenson, Weyl, and Kronecker-Pauli Relationship}}\label{secTNTRREalt}
The goal is to explore the relation between the Weyl operator basis and the Kronecker-Pauli operator basis. To do so, we are inspired by the interaction of the Chrestenson and $X, Y, Z$ for ternary quantum computing~\cite{TER3}.
\begin{prop}
	Let $d>2$ be a prime integer, let $C_d$ denote the Chrestenson transform, $U_{nm}$ the Weyl operators, and $\Pi_\ell$ the Kronecker-Pauli operators. Then, for every $n,m\in \{0,1,\,\dots, d-1\}$ there exist integers $k\in \{0,1,\,\dots, d-1\}$ and $\ell\in \{1,2,\,\dots, d^2\}$ such that
	\begin{equation}\label{propEQ10}
		\large C_dU_{nm}C_d=w^k\Pi_\ell,
	\end{equation}
	where $w$ is the $d$-th root of unity.
\end{prop}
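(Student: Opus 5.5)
The plan is a direct computation in Dirac notation: expand $C_dU_{nm}C_d$ using the definitions \eqref{chresOP} and \eqref{weylOP}, collapse the inner products, and apply the root-of-unity sum $\sum_{k=0}^{d-1}w^{kj}=d\,\delta_{j\equiv 0}$ (the footnote to \eqref{radixdDFT}). The result should be a single "anti-diagonal-type" operator of the form $\sum_b (\text{phase})\,|b\rangle\langle -b+c|$. That is exactly the shape of a Kronecker-Pauli operator, up to a global phase.

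\textbf{Step 1 (collapse the product).} Write
\[
C_dU_{nm}C_d=\frac{1}{d}\sum_{a,b}\sum_{k}\sum_{x,y} w^{ab}\,w^{kn}\,w^{xy}\,|b\rangle\langle a|k\rangle\langle k+m|y\rangle\langle x| .
\]
The inner products force $a=k$ and $y\equiv k+m \pmod d$, leaving
\[
\frac{1}{d}\sum_{b,x} w^{xm}\Big(\sum_{k=0}^{d-1} w^{k(b+n+x)}\Big)|b\rangle\langle x| .
\]
The inner geometric sum equals $d$ if $x\equiv -b-n$ and $0$ otherwise. Hence
\[
C_dU_{nm}C_d=\sum_{b=0}^{d-1} w^{-(b+n)m}\,|b\rangle\langle (-b-n)\bmod d| .
\]

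\textbf{Step 2 (match with $\Pi_{n'm'}$).} Since $d$ is an odd prime, $2$ is invertible mod $d$. I choose $n'\equiv -2^{-1}n \pmod d$, so that $2n'\equiv -n$, and $m'\equiv -m \pmod d$. Then
\[
\Pi_{n'm'}=\sum_b w^{(b-n')(-m)}|b\rangle\langle -b+2n'| = \sum_b w^{-bm+n'm}|b\rangle\langle -b-n| .
\]
Comparing exponents, the two operators differ by the constant factor $w^{-nm-n'm}$. So $C_dU_{nm}C_d=w^{k}\Pi_{n'm'}$ with $k\equiv -m(n+n')\equiv -2^{-1}nm \pmod d$, reduced into $\{0,\dots,d-1\}$.

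\textbf{Step 3 (indexing).} Finally I convert the double index $(n',m')$ into the single label $\ell\in\{1,\dots,d^2\}$, for instance by $\ell=dn'+m'+1$, consistent with the enumeration of the $\tau_\ell$ in Example~\ref{exampleSIX}. A check against $d=3$ would confirm the convention.

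The only genuinely delicate point is Step 2. It needs the invertibility of $2$ modulo $d$, which is where the hypotheses that $d$ is prime and $d>2$ enter; for odd non-prime $d$ the same argument would still work. It also requires careful bookkeeping of signs mod $d$ so that the leftover phase is a pure power of $w$ independent of $b$. Everything else is the routine orthogonality-of-characters computation.
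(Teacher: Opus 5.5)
Your proof is correct and follows the paper's route. You use the same Dirac-notation collapse and character sum to reach $\sum_b w^{-(b+n)m}|b\rangle\langle(-b-n)\bmod d|$, and then write $-n\equiv 2n'$; the paper does this by a parity case split, you by inverting $2$ mod $d$, which is the same thing. You also go further than the paper by exhibiting $m'=-m$ and $k\equiv-2^{-1}nm$ explicitly, which agrees with the $d=3$ and $d=5$ tables, where the paper only asserts the ``same structural form''. One small inaccuracy: $\ell=dn'+m'+1$ is not literally the paper's $\tau$ ordering, since $\Pi_{11}=\tau_6$ and $\Pi_{12}=\tau_5$; this does not matter, because the statement only asks for some $\ell$ and any labelling bijection works.
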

\begin{proof}
	We proceed to compute the product directly, as follows:
	\begin{align}
		C_dU_{nm}C_d&=\frac{1}{d}\sum_{x_1,y_1=0}^{d-1}\sum_{k=0}^{d-1}\sum_{x_2,y_2=0}^{d-1}w^{x_1y_1}\Big|y_1\Big\rangle \Big\langle x_1\Big|w^{kn}\Big|k\Big\rangle\Big\langle \big(k+m\big)\bmod d\Big|w^{x_2y_2}\Big|y_2\Big\rangle \Big\langle x_2\Big|\nonumber\\
		&=\frac{1}{d}\sum_{x_1,y_1=0}^{d-1}\sum_{k=0}^{d-1}\sum_{x_2,y_2=0}^{d-1}w^{x_1y_1}w^{kn}w^{x_2y_2}\Big|y_1\Big\rangle \delta_{x_1,k}\delta_{(k+m)\bmod d,y_2}\Big\langle x_2\Big|.\label{catf5}
	\end{align}
	It is known that:
	\begin{equation}\label{eisntSUM}
		\begin{aligned}
			&\sum_{x_1=0}^{d-1}\big(w^{y_1}\big)^{x_1}\delta_{x_1,k}=w^{y_1k},\\
			&\sum_{y_2=0}^{d-1}\big(w^{x_2}\big)^{y_2}\delta_{(k+m)\bmod d,y_2}=w^{x_2\big[(k+m)\bmod d\big]}.
		\end{aligned}
	\end{equation}
	By substituting eq.~\eqref{eisntSUM} into eq.~\eqref{catf5}, we find
	\begin{equation}\label{befMOD}
		C_dU_{nm}C_d=\frac{1}{d}\sum_{y_1=0}^{d-1}\sum_{k=0}^{d-1}\sum_{x_2=0}^{d-1}w^{y_1k}w^{kn}w^{x_2\big[(k+m)\bmod d\big]}\Big|y_1\Big\rangle\Big\langle x_2\Big|.
	\end{equation}
	Now, let $w$ be a $d$-th root of unity. For any integers $x_2,k,m$, we have $(k+m)\bmod d=k+m-qd$ for some $q\in\mathbb{Z}$. Then, using $w^d=1$, it follows that
	\begin{equation*}
		w^{x_2\big[(k+m)\bmod d\big]}=w^{x_2(k+m-qd)}=w^{x_2(k+m)}w^{-x_2qd}=w^{x_2(k+m)}.
	\end{equation*}
	Under these considerations, eq.~\eqref{befMOD} can be rewritten as
	\begin{equation}\label{befGEOseries}
		C_dU_{nm}C_d=\frac{1}{d}\sum_{y_1=0}^{d-1}\sum_{k=0}^{d-1}\sum_{x_2=0}^{d-1}w^{k(y_1+n+x_2)}w^{x_2m}\Big|y_1\Big\rangle\Big\langle x_2\Big|.
	\end{equation}
	As we see, the sum over $k$ forms a geometric series. Let us check it carefully as shown below:
	\begin{equation}
		\begin{aligned}
			\sum_{k=0}^{d-1}w^{k(y_1+n+x_2)}&=\sum_{k=0}^{d-1}\left(w^{y_1+n+x_2}\right)^k\\
			&=\begin{cases}
				d &\text{ if }y_1+n+x_2\bmod d=0\\
				0 &\text{ otherwise}
			\end{cases}\\
			&=d\delta_{(y_1+n+x_2)\bmod d,0}.
		\end{aligned}
	\end{equation}
	Eq.~\eqref{befGEOseries} become
	\begin{equation}
		C_dU_{nm}C_d=\sum_{y_1=0}^{d-1}\sum_{x_2=0}^{d-1}\delta_{(y_1+n+x_2)\bmod d,0}\text{ }w^{x_2m}\Big|y_1\Big\rangle\Big\langle x_2\Big|.
	\end{equation}
	Since we are interested in the Kronecker delta taking the value 1, we must have $(y_1+n+x_2)\bmod d=0$. Then, we obtain 
	\begin{equation}\label{myEXPCUC}
		C_dU_{nm}C_d=\sum_{x_2=0}^{d-1}\text{ }w^{x_2m}\Big|\big(-x_2-n\big)\bmod d\Big\rangle\Big\langle x_2\Big|.
	\end{equation}
	Let us consider the expression $(-x_2-n)\bmod d$ and distinguish two cases according to the parity of $n$:
	\begin{itemize}
		\item[$\circ$] For $n$ \textit{odd}, adding $d$ yields $(-x_2-n)\bmod d=\big[-x_2+(d-n)\big]\bmod d$. Since $d>2$ is assumed to be a prime integer, $d$ is odd. Consequently, $d-n$ is even whenever $n$ is odd, meaning that we can write $d-n=2h$ with $h=\frac{1}{2}(d-n)$. Then
		\begin{equation}\label{formGENPK}
			C_dU_{nm}C_d=\sum_{x_2=0}^{d-1}\text{ }w^{x_2m}\Big|\big(-x_2+2h\big)\bmod d\Big\rangle\Big\langle x_2\Big|.
		\end{equation}
		\item[$\circ$] For $n$ \textit{even} and $n>1$, writing $n=2p$ and adding $2d$ gives $(-x_2-n)\bmod d=\big[-x_2+2(d-p)\big]\bmod d$. Defining $h=d-p$, this expression again takes the form of eq.~\eqref{formGENPK}.
	\end{itemize}

	\noindent In both cases, the transformed operator has the same structural form: a phase factor multiplied by a permutation of the computational basis. Consequently, we obtain the general relation
	\begin{equation}
		\large C_dU_{nm}C_d=w^{k}\Pi_{\ell},
	\end{equation}
	where $n,m,k\in \{0,1,\,\dots, d-1\}$ and $\ell\in\{1,\,\dots, d^2\}$. This result is in complete agreement with eq.~\eqref{propEQ10}, which completes the proof of the proposition.
\end{proof}

\begin{example}For $d=3$:
	\begin{equation*}
		\begin{array}{l l l}
			C_3U_{00}C_3=\tau_1&\qquad\qquad C_3U_{01}C_3=\tau_3 &\qquad\qquad C_3U_{02}C_3=\tau_2\\
			C_3U_{10}C_3=\tau_4&\qquad\qquad C_3U_{11}C_3=w\tau_5 &\qquad\qquad C_3U_{12}C_3=w^2\tau_6\\
			C_3U_{20}C_3=\tau_7&\qquad\qquad C_3U_{21}C_3=w^2\tau_9 &\qquad\qquad C_3U_{22}C_3=w\tau_8.
		\end{array}
	\end{equation*}
	The third root of unity is denoted by $w$, and the operators $\tau_i$, for $i=1,2,\,\dots,9$, correspond to the KPMs presented in example~\ref{exampleSIX}.
\end{example}
\begin{example} For $d=5$:\\
	\noindent
	To avoid confusion, we denote the fifth root of unity by $\eta=e^{\frac{2i\pi}{5}}$, and $\chi_j$ for $j=1,2,\,\dots,25$ are the KPMs from~\cite{setOfKPMs}.
	\begin{equation*}
		\begin{array}{l l l l l}
			C_5U_{00}C_5=\chi_{1}&C_5U_{01}C_5=\chi_{2}&C_5U_{02}C_5=\chi_{3}&C_5U_{03}C_5=\chi_{4}&C_5U_{04}C_5=\chi_{5}\\
			C_5U_{10}C_5=\chi_{11}&C_5U_{11}C_5=\eta^2\chi_{12}&C_5U_{12}C_5=\eta^4\chi_{13}&C_5U_{13}C_5=\eta\chi_{14}&C_5U_{14}C_5=\eta^3\chi_{15}\\
			C_5U_{20}C_5=\chi_{21}&C_5U_{21}C_5=\eta^4\chi_{22}&C_5U_{22}C_5=\eta^3\chi_{23}&C_5U_{23}C_5=\eta^2\chi_{24}&C_5U_{24}C_5=\eta\chi_{25}\\
			C_5U_{30}C_5=\chi_{6}&C_5U_{31}C_5=\eta\chi_{7}&C_5U_{32}C_5=\eta^2\chi_{8}&C_5U_{33}C_5=\eta^3\chi_{9}&C_5U_{34}C_5=\eta^4\chi_{10}\\
			C_5U_{40}C_5=\chi_{16}&C_5U_{41}C_5=\eta^3\chi_{17}&C_5U_{42}C_5=\eta\chi_{18}&C_5U_{43}C_5=\eta^4\chi_{19}&C_5U_{44}C_5=\eta^2\chi_{20}.
		\end{array}
	\end{equation*}
\end{example}

\section{\textsc{Conclusion and Outlook}}\label{ccl}
For $d$-dimensional system, with $d$ a prime integer greater than 2, our results demonstrate that the Chrestenson operators serves as a connecting transformation, mapping Weyl operators to Kronecker-Pauli operators up to phase factors and index permutations. This framework provides a systematic and transparent interpretation of these operator bases within a single formalism, highlighting their distinct roles in operator decompositions and their interrelations. From the perspective of quantum computation, this relation suggests the existence of equivalences between certain quantum gates or circuits when expressed in different operator bases. These equivalences could be exploited to simplify circuit designs or to translate quantum algorithms between alternative representations.

For future research, it would be of considerable interest to inverstigate the effects of taking the conjugate of a Chrestenson operator, i.e., $C_d^\dagger U_{nm}C_d$ and/or $C_dU_{nm}C_d^\dagger$. Extending this formulation to all finite-dimensional Hilbert spaces would also be a promising direction for further study. Additionally, applying this relation to aspects of ternary reversible computing, as developed in~\cite{TER3}, opens promising avenues for qutrit-level quantum computation. Furthermore, revisiting the sets of matrices used as models for error-correction schemes in ternary and higher-dimensional quantum systems is worthwhile~\citep{qecTernary, qecHAFA}, since these are KPMs and our relation show that they can be expressed entirely in terms of Chrestenson operators and Weyl operators.

\section*{Author Contributions}
All authors contributed equally to the conception, proof of idea, computation, and writing of this work.

\section*{Declarations}
\begin{itemize}
	\item[\textbf{-}] \textbf{Funding:} This research received no funding.
	\item[\textbf{-}] \textbf{Conflicts of interest:} The authors declare no conflicts of interest.
	\item[\textbf{-}] \textbf{Data availability:} No external data were used in this research.
\end{itemize}

\end{document}